\documentclass[12pt, final]{article}

\usepackage{amsmath, amsthm, amssymb, amsfonts}

\usepackage[letterpaper,margin=1.25in]{geometry} 
\usepackage{setspace}
\usepackage[dvipsnames]{xcolor}
\usepackage{tikz}
\usepackage{subfiles}
\usepackage{ifthen}

\theoremstyle{plain}
\newtheorem{theorem}{Theorem}

\newtheorem{proposition}{Proposition}

\theoremstyle{definition}

\usepackage{natbib}
\usepackage[
    colorlinks=true,
    linkcolor=blue,
    citecolor=blue,
    urlcolor=blue,
    filecolor=blue,
    hyperfootnotes=false,
    hyperindex,
    breaklinks
]{hyperref}

\usepackage{newtxtext} 
\usepackage{newtxmath}
\usepackage{microtype}
\microtypesetup{expansion=false} 

\usepackage[normalem]{ulem}

\title{Pair rationality and top trading cycles on single-peaked and single-dipped domains%
  \thanks{The authors used ChatGPT to assist with editing. The authors take full responsibility for the content of the paper, including its results, proofs, and references.}}
\author{
\"Ozg\"un Ekici\thanks{Department of Economics, \"Ozye\u{g}in University, Istanbul, T\"urkiye. Email: \texttt{ozgun.ekici@ozyegin.edu.tr}}
\and
M. Bumin Yenmez\thanks{Department of Economics, Washington University in St. Louis, USA; Durham University, UK; and \"Ozye\u{g}in University, T\"urkiye. Email: \texttt{bumin@wustl.edu}}
}
\date{}

\begin{document}
\maketitle

\begin{abstract}
In a recent study, \citet{EkiciYenmez2026} characterize top trading cycles (TTC) on the unrestricted domain by strategy-proofness, individual rationality, and pair rationality, uncovering a correspondence between the axiomatic foundations of TTC and deferred acceptance, matching theory's two canonical rules. We show that their characterization fails on the single-peaked domain, but it holds on the single-dipped domain even without strategy-proofness. They also show that pair rationality can be replaced in their characterization by two weaker conditions concerning agents' first and second choices. In contrast, this replacement fails on the single-dipped domain, even when strategy-proofness is imposed.
\end{abstract}

\begingroup\small
\noindent\textbf{Keywords:} top trading cycles; pair rationality; single-peaked preferences; single-dipped preferences.\\
\noindent\textbf{JEL Classification:} C78, D47, D78.
\endgroup

\section{Introduction}

In a classic result, \citet{Alcalde1994ET} show that, in the marriage problem, deferred acceptance (DA) is the unique rule satisfying individual rationality (IR), pair rationality (PR), and strategy-proofness (SP) for the proposing side. In a recent study, \citet{EkiciYenmez2026} establish a striking parallel in the object reallocation problem: top trading cycles (TTC) is the unique rule satisfying the same three axioms, with SP now required of all agents. Together, these characterizations reveal a common axiomatic foundation for DA and TTC, matching theory's two canonical rules.\footnote{DA and TTC were introduced by \citet{Gale1962AMM} and \citet{Shapley1974JME}, respectively. In the marriage problem, IR and PR together correspond to the notion of stability.}

We study whether this characterization survives under single-peakedness and single-dippedness. Both describe preferences over objects ordered by a common characteristic, such as size or location: preferences worsen away from an agent's peak under single-peakedness and improve away from her dip under single-dippedness.

We show that the characterization fails under single-peakedness (\autoref{prop:peaked}) but holds under single-dippedness even without SP (\autoref{thm:dipped}): IR and PR uniquely determine the TTC allocation at every profile. In contrast to the unrestricted-domain result of \citet{EkiciYenmez2026}, replacing PR with top-top rationality and top2-top rationality destroys uniqueness on the single-dipped domain, even when SP is imposed (\autoref{prop:weak}).

Our paper is most closely related to efficiency-based characterizations of TTC on restricted domains. \citet{Ma1994IJGT} characterizes TTC by IR, Pareto efficiency, and SP on the unrestricted domain. \citet{Bade2019} shows that this characterization fails on the single-peaked domain; \citet{Tamura2023} shows that it survives on the single-dipped domain. \citet{Ekici2024TE} strengthens Ma's result by replacing Pareto efficiency with pair efficiency; \citet{HuZhang2024} show that this stronger characterization also survives on the single-dipped domain. Notice that these efficiency notions rule out welfare-improving reallocations of assigned objects. By contrast, our analysis relies on PR, which rules out improvements through exchanges of endowments and is logically independent of Pareto efficiency and pair efficiency.

\section{Model}

Let $N=\{1,\ldots,n\}$ be a finite set of agents. Let $O=\{o_1,\ldots,o_n\}$ be a set of indivisible objects. Each agent $i$ initially owns object $o_i$ (i.e., $o_i$ is $i$'s \emph{endowment}).

Let $\mathcal P$ be the set of all strict preference orders over $O$. Agent $i$'s order is $P_i$: $a\,P_i\,b$ means that she prefers $a$ to $b$. Its weak counterpart is $R_i$, so $a\,R_i\,b$ means $a\,P_i\,b$ or $a=b$. A preference profile is $P=(P_i)_{i\in N}$. In explicit rankings, $\succ$ denotes strict preference.

An allocation is a bijection $\mu:N\to O$, where $\mu(i)$ is agent $i$'s assignment. Given a preference domain $\mathcal D\subseteq\mathcal P$, a rule $\phi$ selects an allocation at each $P\in\mathcal D^n$; $\phi_i(P)$ denotes agent $i$'s assignment at $P$.

We fix a common ordering of objects, $o_1<\cdots<o_n$. Preferences are \emph{single-peaked} if they worsen away from the most preferred object along either side of the ordering, and \emph{single-dipped} if they improve away from the least preferred object along either side of the ordering. Observe that if some objects are removed, preferences over the remaining objects are still single-peaked or single-dipped.

Fix a preference domain $\mathcal D\subseteq\mathcal P$. Our analysis centers on three axioms for rules on $\mathcal D^n$:

A rule $\phi$ satisfies \emph{strategy-proofness (SP)} if truthful reporting is always optimal: $\phi_i(P)\,R_i\,\phi_i(\widetilde P_i,P_{-i})$ for every $P\in\mathcal D^n$, $i\in N$, and $\widetilde P_i\in\mathcal D$, where $P_{-i}$ denotes the other agents' preferences.

An allocation $\mu$ satisfies \emph{individual rationality (IR)} at $P$ if each agent weakly prefers her assignment to her endowment: $\mu(i)\,R_i\,o_i$ for every $i \in N$. A rule $\phi$ satisfies IR if $\phi(P)$ satisfies IR at $P$ for every $P\in\mathcal D^n$.

An allocation $\mu$ satisfies \emph{pair rationality (PR)} at $P$ if no pair can improve upon their assignments by exchanging their endowments: there are no distinct agents $i$ and $j$ such that $o_j\,P_i\,\mu(i)$ and $o_i\,R_j\,\mu(j)$. A rule $\phi$ satisfies PR if $\phi(P)$ satisfies PR at $P$ for every $P\in\mathcal D^n$. Thus, PR rules out exchanges that strictly benefit one participant without harming the other.

The prominent rule in this problem is \emph{top trading cycles} (TTC), which we denote by $\varphi^{\mathrm{ttc}}$. TTC reallocates objects through successive rounds of trading. In the first round, each agent points to her favorite object, and each object points to its owner. This creates a directed graph containing at least one cycle. We write $(i_1,\ldots,i_m)$ for a cycle in which each agent points to the next agent's endowment, with $i_1$ being the next agent for $i_m$. An allocation \emph{executes} a cycle if each agent in it receives the object to which she points. TTC executes all cycles and removes their agents and objects. In each subsequent round, the remaining agents point to their favorite remaining objects, and we execute the resulting cycles in the same way. The procedure ends when every agent has received an object.

\section{Results}

\citet{EkiciYenmez2026} show that on the unrestricted domain (when $\mathcal D = \mathcal P$) TTC uniquely satisfies SP, IR, and PR. We first show that this unrestricted-domain characterization fails under single-peakedness.\footnote{\citet{Bade2019}'s crawler rule shows that SP, IR, and Pareto efficiency do not characterize TTC on the single-peaked domain. It cannot establish \autoref{prop:peaked} because it violates PR.}

\begin{proposition}\label{prop:peaked}
For every $n\geq3$, there is a rule on the single-peaked domain distinct from TTC that satisfies SP, IR, and PR.
\end{proposition}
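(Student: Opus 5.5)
The plan is to build the rule by hand: it coincides with TTC everywhere except on a carefully chosen set $S$ of profiles, where it selects a different allocation that still satisfies IR and PR. I would first work out the case $n=3$ and then extend to larger $n$.

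\textbf{Step 1: a non-TTC allocation satisfying IR and PR.} Take $n=3$ and the single-peaked profile
\[
P_1: o_2\succ o_3\succ o_1,\qquad P_2: o_3\succ o_2\succ o_1,\qquad P_3: o_1\succ o_2\succ o_3 .
\]
TTC executes the cycle $(1,2,3)$ and gives $(o_2,o_3,o_1)$. The allocation $\mu=(o_3,o_2,o_1)$ is IR. It is also PR, which I would verify pair by pair:
\begin{itemize}
\item Pair $(1,2)$: agent $1$ prefers $o_2$ to $\mu(1)$, but agent $2$ ranks $o_1$ below $o_2$.
\item Pair $(2,3)$: agent $2$ prefers $o_3$ to $\mu(2)$, but agent $3$ ranks $o_2$ below $o_1$.
\item Pair $(1,3)$: agent $1$ already holds $o_3$, and agent $3$ holds her top object.
\end{itemize}
So at least one profile admits an IR and PR allocation other than the TTC one. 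Since TTC gives the unique core allocation, any such alternative makes some agent strictly worse off than under TTC. Here that agent is agent $1$.

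\textbf{Step 2: choosing $S$ so that SP holds.} The danger is that the loser can restore her TTC outcome by misreporting. Here agent $1$ could report $o_2\succ o_1\succ o_3$ and TTC would give her $o_2$. So $S$ cannot be a single profile. I would define $S$ as the set of profiles where agents $2$ and $3$ report exactly as above and agent $1$'s peak is $o_2$. On $S$ the rule picks the allocation in which agent $1$ receives $o_3$ when $o_3\,P_1\,o_1$, and the TTC allocation otherwise. This fails IR and PR when agent $1$ ranks $o_1$ second, so the construction has to be adjusted. The most promising adjustment is to place the loser at an end of the ordering, so that single-peakedness leaves her only a few preferences with a given peak. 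I would search systematically over the small single-peaked domain for $n=3$ (four preferences per agent, about $64$ profiles) for a profile and an IR and PR alternative whose loser is an endpoint agent. I would then take $S$ to be closed under that agent's deviations, so that every report by her either keeps the profile in $S$ or yields a TTC outcome she does not prefer.

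\textbf{Step 3: the remaining SP checks.} For the winners, I would check that leaving $S$ gives them nothing better than what the rule assigns on $S$; this follows from TTC's SP and from comparing their assignments. For agents outside $S$, I would check that no deviation moves the profile into $S$ to their benefit.

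\textbf{Step 4: general $n\ge 3$.} I would let $S$ additionally require agents $4,\dots,n$ to rank their own endowments first. TTC then leaves them in place and the three-agent argument carries over. If one of them deviates out of this configuration, the profile leaves $S$ and the rule is TTC, which does not benefit her, because TTC is SP and she already gets her top object. PR for pairs involving these agents is immediate, since each holds her peak and no exchange can make her strictly better or weakly as well off with a different object.

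I expect the main obstacle to be Step 2: designing $S$ so that the loser cannot escape through a misreport while the modified allocation remains IR and PR at every profile in $S$.
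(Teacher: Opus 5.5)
\textbf{There is a genuine gap: the proposal never defines a rule, so strategy-proofness, the whole difficulty, is never established.} Your Steps 1 and 4 are fine. At $A=(o_2\succ o_3\succ o_1,\ o_3\succ o_2\succ o_1,\ o_1\succ o_2\succ o_3)$ the allocation $(o_3,o_2,o_1)$ is IR and PR, and it is exactly what the paper's rule selects there. Padding with agents who rank their own endowments first is also the paper's extension. The problem is Step 2. Let $B$ be $A$ with agent $1$ switched to $o_2\succ o_1\succ o_3$. You discard $S=\{A,B\}$ on a wrong diagnosis: using TTC at $B$ violates neither IR nor PR. It violates SP, because agent $1$ at $A$ reports her $B$-preference and obtains $o_2$. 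You then fall back on an unexecuted search, the ``loser at an end'' heuristic is never made precise, and Step 3 only lists checks to be done.

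\textbf{The missing idea is to use the \emph{other} IR and PR allocation at the deviation profile.} At $B$ the IR and PR allocations are exactly TTC's $(o_2,o_3,o_1)$ and $(o_1,o_3,o_2)$. Choosing $(o_1,o_3,o_2)$ closes the hole, and $S=\{A,B\}$ then works:
\begin{itemize}
\item With agents $2$ and $3$ fixed, agent $1$'s four possible reports yield $o_3$ (at $A$), $o_1$ (at $B$), $o_1$ (peak $o_1$) and $o_3$ (peak $o_3$). She can never obtain $o_2$, and each of her types in $S$ receives her preferred object among $o_1$ and $o_3$.
\item The other agents who lose relative to TTC are agent $2$ at $A$ and agent $3$ at $B$. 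Each has her peak at an end of the ordering, so any misreport stops her from pointing to her top. Agent $2$ then gets only $o_1$ or $o_2$, and agent $3$ only $o_2$ or $o_3$.
\item At every profile one unilateral deviation away from $S$, the deviating agent already receives her top under TTC.
\end{itemize}
This local rule agrees with the paper's rule on $S$. The paper instead modifies TTC at every profile with a three-agent cycle, always keeping one agent at her endowment and letting the other two swap. That makes the SP argument uniform: inside a three-agent cycle, whether an agent receives her first choice depends only on how the next two agents rank their own endowments, and leaving the cycle only gives her her reported top.
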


\begin{proof}
We first suppose that $n=3$ and describe a rule $\phi \neq \varphi^{\mathrm{ttc}}$ that satisfies SP, IR, and PR.

At each profile $P$, let $\phi(P)=\varphi^{\mathrm{ttc}}(P)$ unless the initial graph contains a three-agent cycle. If there is such a cycle, it is $(1,2,3)$ or $(1,3,2)$. If the cycle is $(1,2,3)$, single-peakedness forces agent 2 to rank $o_3\succ o_2\succ o_1$ and agent 3 to rank $o_1\succ o_2\succ o_3$. If the cycle is $(1,3,2)$, single-peakedness forces agent 1 to rank $o_3\succ o_2\succ o_1$ and agent 2 to rank $o_1\succ o_2\succ o_3$. Notice that in each case at least one agent ranks her endowment second and at least one agent ranks her endowment last.

We complete the definition of $\phi$ as follows. If two agents rank their endowments last, write the cycle as $(i,j,k)$ with $k$ ranking hers second. Then agents $i$ and $j$ exchange their endowments and $k$ keeps hers. If two agents rank their endowments second, write the cycle as $(i,j,k)$ with $k$ ranking hers last. Then agent $i$ keeps her endowment and $j$ and $k$ exchange their endowments. Crucially, observe that in a three-agent cycle an agent receives her first choice if and only if the next two agents rank their endowments last and second, respectively.

\medskip

\noindent\textbf{Strategy-proofness.}
Fix agent $i$ and the others' reports and suppose $i$ is truthful. We show that $i$ cannot gain from manipulation.

If the others form a cycle without $i$, $\phi$ selects the TTC allocation for every report by $i$; then, since TTC satisfies SP, $i$ cannot gain from manipulation. Otherwise, there is one cycle and it contains $i$. If this cycle contains fewer than three agents, $\phi$ selects the TTC allocation and $i$ receives her first choice, so $i$ cannot gain from manipulation.

Therefore, suppose that the three-agent cycle $(i,j,k)$ forms. If $j$ and $k$ respectively rank their endowments last and second, $i$ receives her first choice, so she cannot gain from manipulation. Otherwise, $i$ receives her second choice, and a manipulation is successful only if it gives $i$ her first choice. But this is impossible: When $i$ misreports her first choice, this yields a singleton or two-agent cycle that contains $i$. Then $\phi$ selects the TTC allocation, so $i$ receives her reported first choice, which is not her true first choice. When $i$ switches her second and third choices, the cycle $(i,j,k)$ is preserved. Since the rankings of $j$ and $k$ remain unchanged, $i$ still cannot receive her true first choice. Therefore, $\phi$ satisfies SP.

\medskip
\noindent\textbf{Individual rationality.}
IR holds whenever $\phi$ selects the TTC allocation. Otherwise, every agent receives her first or second choice, while her endowment is ranked second or last. Therefore, $\phi$ satisfies IR.

\medskip
\noindent\textbf{Pair rationality.}
PR holds whenever $\phi$ selects the TTC allocation. Otherwise, write the cycle as $(i,j,k)$, where $i$ and $j$ exchange endowments and $k$ keeps hers. By construction, $k$'s ranking is $o_i\succ o_k\succ o_j$. Agent $i$ receives her first choice, $o_j$. Thus, an endowment exchange between $i$ and $k$ would make $i$ worse off, while one between $j$ and $k$ would make $k$ worse off. Finally, $i$ and $j$ already receive each other's endowments, so neither can improve through that exchange. Therefore, $\phi$ satisfies PR.

\medskip
\noindent\textbf{Extension to more than three agents.}
For $n>3$, we define the rule $\theta \neq \varphi^{\mathrm{ttc}}$ as follows: At profile $P$, if every agent $i\notin\{1,2,3\}$ ranks $o_i$ first, they are all assigned their endowments, and the assignments of agents $1,2,3$ are determined by the rule $\phi$, described above, using their preferences restricted to $\{o_1,o_2,o_3\}$. Observe that these restricted preferences remain single-peaked. At every other profile, let $\theta(P)=\varphi^{\mathrm{ttc}}(P)$. The rule $\theta$ satisfies SP, IR, and PR. The verification is straightforward and left to the reader.
\end{proof}

In contrast, the unrestricted-domain characterization survives and can even be strengthened under single-dippedness.

\begin{theorem}\label{thm:dipped}
On the single-dipped domain, TTC is the unique rule satisfying IR and PR.
\end{theorem}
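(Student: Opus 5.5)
We plan to prove the two directions separately: TTC satisfies IR and PR on the single-dipped domain, and any allocation satisfying IR and PR at a single-dipped profile coincides with the TTC allocation. Existence is immediate. \citet{EkiciYenmez2026} show that TTC satisfies IR and PR on the unrestricted domain $\mathcal P$. Both axioms are conditions on the allocation at each profile, so they continue to hold when TTC is restricted to $\mathcal D^n$ with $\mathcal D$ the single-dipped domain. The substance is uniqueness, and for it we will not use SP. Instead we show directly that, at every single-dipped profile $P$, the only allocation satisfying IR and PR at $P$ is $\varphi^{\mathrm{ttc}}(P)$.

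The key structural observation is this. A single-dipped preference over a set of objects arranged on a line has its top at one of the two extreme objects of that set, namely the leftmost or the rightmost. At any round of TTC, let the remaining objects be $o_a<\cdots<o_b$, owned by agents $a$ and $b$ at the extremes. Every remaining agent points to $o_a$ or $o_b$, so every cycle consists only of agents owning pointed-to objects. Hence the only possible cycles are the self-loop $(a)$, the self-loop $(b)$, or the two-agent cycle $(a,b)$.

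We then argue by induction on the number of agents, stated for an arbitrary finite set of agents whose endowments are linearly ordered and whose preferences are single-dipped with respect to that order. Fix an allocation $\mu$ satisfying IR and PR at $P$, and consider the first-round cycles.
\begin{itemize}
\item If agent $a$ ranks $o_a$ first, IR forces $\mu(a)=o_a$. The same argument applies to agent $b$ when she ranks $o_b$ first.
\item Suppose instead the cycle is $(a,b)$, so agent $a$ ranks $o_b$ first and agent $b$ ranks $o_a$ first. If $\mu(a)\neq o_b$, then $o_b\,P_a\,\mu(a)$. Moreover, $o_a\,R_b\,\mu(b)$ holds trivially because $o_a$ is $b$'s top choice. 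This is exactly a PR violation, so $\mu(a)=o_b$, and symmetrically $\mu(b)=o_a$.
\end{itemize}
In every case $\mu$ executes the first-round TTC cycles.

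To finish, we remove the agents in executed cycles together with their endowments. The restriction of $\mu$ to the remaining agents is a bijection onto the remaining objects. It still satisfies IR and PR for the restricted profile, because both axioms only involve individual agents' endowments and pairs of agents' endowments, all of which remain. The restricted preferences are still single-dipped, as noted in the model section. TTC on the original problem proceeds exactly as TTC on the reduced problem. So the induction hypothesis gives that $\mu$ agrees with TTC on the remaining agents, and thus $\mu=\varphi^{\mathrm{ttc}}(P)$.

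The only delicate point is making the induction statement general enough. It must cover arbitrary ordered subsets of objects rather than $\{o_1,\ldots,o_n\}$, so that the "top is an extreme" observation applies at every round. Once this is set up, the argument above goes through without obstruction.
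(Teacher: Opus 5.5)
Your proposal is correct and follows essentially the same argument as the paper. Every agent's top remaining object is an extreme remaining endowment, so the first-round cycles can only be $(a)$, $(b)$, or $(a,b)$, and IR and PR force $\mu$ to execute them; one then removes these cycles and iterates, which your induction over arbitrary ordered object sets simply formalizes while spelling out the PR step and the preservation of IR, PR, and single-dippedness under restriction.
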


\begin{proof}
TTC satisfies both axioms. Conversely, fix a single-dipped profile $P$ and an allocation $\mu$ satisfying IR and PR. We show that $\mu$ executes every TTC cycle.

Observe that, under single-dippedness, every agent's first choice is $o_1$ or $o_n$. These are the endowments of agents whose indices are smallest and largest. Thus, the first round of TTC produces either the cycle $(1,n)$, or one or both of the singleton cycles $(1)$ and $(n)$. By IR and PR, $\mu$ executes the resulting cycle or cycles.

After removing these agents and their endowments, $\mu$ assigns the remaining agents only remaining objects, and IR and PR still hold. Each remaining agent's favorite remaining object is the endowment of a remaining agent whose index is smallest or largest. Thus, repeating the same argument, $\mu$ executes the resulting cycle or cycles. Iterating, we obtain $\mu=\varphi^{\mathrm{ttc}}(P)$.
\end{proof}

\citet{EkiciYenmez2026} strengthen their unrestricted-domain characterization by replacing PR with two weaker axioms concerning a pair of agents' first and second choices. We next show that this replacement fails on the single-dipped domain.

An allocation $\mu$ satisfies \emph{top-top rationality (TTR)} at $P$ if, whenever distinct agents $i$ and $j$ rank each other's endowments first, $\mu(i)=o_j$ and $\mu(j)=o_i$. An allocation $\mu$ satisfies \emph{top2-top rationality (T2TR)} at $P$ if, whenever distinct agents $i$ and $j$ rank $o_j$ second and $o_i$ first, respectively, either $\mu(i)$ is agent $i$'s first choice or both $\mu(i)=o_j$ and $\mu(j)=o_i$.
A rule $\phi$ satisfies TTR (or T2TR) on $\mathcal D^n$ if, at every $P\in\mathcal D^n$, its chosen allocation $\phi(P)$ satisfies the corresponding axiom at $P$.

\begin{proposition}\label{prop:weak}
For every $n\geq4$, there is a rule on the single-dipped domain distinct from TTC that satisfies SP, IR, TTR, and T2TR.
\end{proposition}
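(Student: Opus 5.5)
The plan is to build the rule directly and exploit one fact about the single-dipped domain: every agent's first choice is $o_1$ or $o_n$. Define $\psi$ as follows. At a profile $P$, if agents $1$ and $n$ both rank their own endowments first, then $\psi(P)$ assigns every agent her own endowment (no trade). At every other profile, $\psi(P)=\varphi^{\mathrm{ttc}}(P)$.

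\textbf{Step 1: $\psi$ differs from TTC when $n\ge4$.} I will exhibit a single profile. Let agents $1$ and $n$ rank their endowments first. Let agent $2$ rank $o_n\succ o_1\succ\cdots\succ o_3\succ o_2$, with dip $o_2$. Let agent $3$ rank $o_1\succ o_n\succ\cdots\succ o_2\succ o_3$, with dip $o_3$. Let every other interior agent rank her own endowment last. All of these orders can be chosen single-dipped.

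In TTC, agents $1$ and $n$ leave in the first round. Agents $2$ and $3$ then eventually point to each other's endowments, since each ranks the other's endowment just above her own dip. Their exchange can be arranged to occur, for instance, by taking $n=4$ and padding with agents who keep their endowments for larger $n$. Hence $\varphi^{\mathrm{ttc}}\neq\psi$ at this profile. This is exactly where $n\ge4$ is needed: one needs two interior agents who can trade.

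\textbf{Step 2: IR, TTR, and T2TR.} IR is immediate, since no-trade is individually rational and TTC is as well.

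For TTR and T2TR, it suffices to check profiles in the special class. Suppose $j$ ranks $o_i$ first. Single-dippedness gives $o_i\in\{o_1,o_n\}$, so $i\in\{1,n\}$.

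For TTR, if $i$ and $j$ rank each other's endowments first, then both lie in $\{1,n\}$. Hence $i$ ranks $o_n$ or $o_1$ (the other's endowment) first, which contradicts that $1$ and $n$ rank their own endowments first.

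For T2TR, the agent $i$ whose endowment is ranked first by $j$ is $1$ or $n$. Under $\psi$, that agent receives her own endowment, which is her first choice. So the first disjunct of T2TR holds.

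\textbf{Step 3: SP, the main step.} I would split by agent.

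Take an interior agent $i\notin\{1,n\}$. Whether the profile lies in the special class depends only on $P_1$ and $P_n$. So if the others' reports place the profile in the class, $i$ receives $o_i$ under every report. If they do not, the outcome is TTC under every report of $i$, and TTC is SP. So $i$ cannot gain.

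Now take agent $1$; agent $n$ is symmetric. Suppose $1$ truly ranks $o_1$ first. If $P_n$ ranks $o_n$ first, $1$ gets $o_1$. Otherwise TTC applies, and $1$ again gets $o_1$, her top choice. Either way she cannot gain.

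Suppose instead $1$ does not truly rank $o_1$ first. Truthful reporting yields TTC. A misreport either keeps the profile outside the class, where TTC's SP applies, or places it in the class, giving her $o_1$. Since TTC is IR, $1$ weakly prefers her TTC assignment to $o_1$, so this deviation does not help either.

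I expect the only delicate point to be Step 1: choosing the interior preferences so that they are genuinely single-dipped while still making TTC trade. The SP verification should be routine once the case split above is in place.
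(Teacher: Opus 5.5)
Your proof is correct. It uses the same template as the paper's, with a different trigger event.

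\textbf{Comparison with the paper's rule.} The paper's rule $\eta$ departs from TTC only when agents $1$ and $n$ rank each other's endowments first. It then executes the first-round cycle $(1,n)$ and freezes every interior agent at her endowment. Your rule $\psi$ departs when $1$ and $n$ both rank their own endowments first. It executes the first-round cycles $(1)$ and $(n)$ and freezes everyone else. The verification is structurally identical in the two cases:
\begin{itemize}
\item The trigger depends only on $P_1$ and $P_n$, so SP for interior agents reduces to SP of TTC.
\item In the trigger event, agents $1$ and $n$ receive their first choices. So TTR holds there (vacuously, in your case). T2TR holds through its first disjunct, because any endowment that someone ranks first belongs to $1$ or $n$.
\item SP for agents $1$ and $n$ follows from IR and SP of TTC.
\end{itemize}
Your write-up supplies exactly the verification the paper leaves to the reader. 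Both constructions coincide with TTC when $n=3$, which is why $n\ge4$ is needed.

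\textbf{A slip in Step 1.} For $n\ge5$, your stated profile does not make agents $2$ and $3$ swap. After round $1$, agents $2$ and $3$ both point to $o_{n-1}$. Agent $n-1$, whose dip is $o_{n-1}$, points to $o_2$. So TTC executes the cycle $(2,n-1)$, not a $2$--$3$ exchange. Distinctness still holds, because TTC involves a trade among interior agents while $\psi$ prescribes no trade. The simplest witness for every $n\ge4$ is to let each interior agent rank her own endowment last. Your remark about ``padding with agents who keep their endowments'' needs care, since no interior agent can rank her own endowment first on this domain. It is also unnecessary.
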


\begin{proof}
Define the rule $\eta \neq \varphi^{\mathrm{ttc}}$ as follows. At each profile $P$, if agents $1$ and $n$ rank each other's endowments first, they exchange their endowments and every interior agent $i\notin\{1,n\}$ keeps hers. Otherwise, let $\eta(P)=\varphi^{\mathrm{ttc}}(P)$. The rule $\eta$ satisfies SP, IR, TTR, and T2TR. The verification is straightforward and left to the reader.
\end{proof}

\clearpage
\begingroup
\singlespacing
\setlength{\bibsep}{0pt}
\bibliography{matchingBib}
\endgroup
\end{document}